\documentclass[10pt]{article}
\usepackage[T1]{fontenc}
\usepackage[american]{babel}
\usepackage[margin=1in]{geometry}
\usepackage{titling}
\usepackage{amsmath,amssymb,amsthm,mathtools}
\usepackage{csquotes}
\usepackage{microtype}
\usepackage{tikz}
\usetikzlibrary{arrows.meta,backgrounds,fit,positioning}
\usepackage[hidelinks]{hyperref}
\usepackage[
  backend=biber,
  style=authoryear,
  giveninits=true,
  maxcitenames=2,
  maxbibnames=99,
  uniquename=false,
  doi=false,
  isbn=false,
  url=false,
  eprint=false
]{biblatex}

\title{Unsecured Lending via Delegated Underwriting}
\author{Diego Estevez\\
Divine Research\\
\texttt{diego@divine.inc}}
\date{May 2026}

\hypersetup{
  pdftitle={Unsecured Lending via Delegated Underwriting},
  pdfauthor={Diego Estevez},
  pdfsubject={Pseudonymous credit and delegated underwriting},
  pdfkeywords={pseudonymous lending, delegated underwriting, credit networks, Sybil resistance}
}

\newtheoremstyle{compactplain}
  {4pt plus 1pt minus 1pt}
  {4pt plus 1pt minus 1pt}
  {\itshape}
  {}
  {\bfseries}
  {.}
  {.5em}
  {}

\theoremstyle{compactplain}
\newtheorem{theorem}{Theorem}
\newtheorem{definition}{Definition}

\AtBeginBibliography{\small}
\newcommand{\BaseBudget}[1]{\hat{E}_{#1}}
\newcommand{\Budget}[1]{E_{#1}}
\newcommand{\Earned}[1]{G_{#1}}
\newcommand{\Principal}[1]{x_{#1}}
\newcommand{\LoanAmount}[1]{L_{#1}}
\newcommand{\CreditLimit}[1]{c_{#1}}
\newcommand{\Deleg}[2]{a_{#1\to#2}}
\newcommand{\RequiredDeleg}[1]{R_{#1}}
\newcommand{\Absorb}[1]{\delta_{#1}}
\newcommand{\Loss}{\ell}
\newcommand{\LossOut}{\ell_{\mathrm{out}}}
\newcommand{\Payout}[1]{\operatorname{delegation\,payout}\!\left(#1\right)}

\tikzset{
  delegation-tree/.style={
    font=\small,
    node distance=9mm and 12mm
  },
  delegation-node/.style={
    draw=black,
    line width=0.35pt,
    rounded corners=4pt,
    minimum width=1.7cm,
    minimum height=0.8cm,
    align=center,
    inner sep=2pt
  },
  seed-node/.style={
    delegation-node,
    fill=gray!25,
  },
  user-node/.style={
    delegation-node,
    fill=white,
  },
  delegation-edge/.style={
    draw,
    semithick,
    -{Latex[length=2.6mm,width=1.6mm]},
    shorten <=1pt,
    shorten >=3pt
  },
  forest-box/.style={
    draw=gray!55,
    dashed,
    rounded corners=7pt,
    inner sep=9pt
  }
}

\begin{document}

\maketitle

\begin{abstract}
We develop a mechanism for unsecured lending among pseudonymous users that does not rely on collateral, legal identity, or centralized underwriting. New borrowers enter only through sponsors who delegate part of their own credit capacity, so onboarding a new account reallocates existing borrowing power rather than minting new capacity. Default losses flow back along the sponsor path, while repayment creates earned credit that expands future borrowing capacity. We prove that delegation conserves aggregate credit capacity, that revocation and default remain local to a unique sponsor path, and that a simple cap on earned-credit growth makes repay-then-default weakly unprofitable.
\end{abstract}

\noindent\textbf{Keywords:} pseudonymous lending, delegated underwriting, credit networks, Sybil resistance

\section{Introduction}

Unsecured lending among pseudonymous users poses a fundamental identity problem. When identities are cheap to create and enforcement is weak, lenders cannot rely on collateral, legal recourse, or durable real-world reputation. A borrower who defaults can reappear under a fresh pseudonym and repeat the behavior, steadily draining the protocol.

We address this problem by making borrowing capacity both scarce and earned. New capacity arises only through repayment, so any new borrower must enter through an existing participant who delegates part of their own capacity. The same sponsor path also determines how losses are absorbed: default first extinguishes earned credit and then contracts delegated capacity upstream to a seed.

We assume a set of highly creditworthy seed accounts that provide the initial capacity that bootstraps the network. The mechanism then specifies how this capacity is redistributed across sponsor links, expanded through repayment, and contracted through default.

\noindent\textbf{Contributions.} The paper establishes three main results. First, delegation conserves aggregate credit capacity, so splitting into multiple pseudonyms cannot create new borrowing power. Second, revocation and default are local and well-defined along a unique sponsor path. Third, post-repayment credit growth is capped to make repay-then-default weakly unprofitable.

Section~\ref{sec:related} reviews related work on peer monitoring, credit networks, and Sybil resistance. Section~\ref{sec:mechanism} formalizes the mechanism and proves the main results.

\section{Related Work}\label{sec:related}

Delegated underwriting sits at the intersection of work on peer monitoring, credit networks, Sybil resistance, and unsecured lending.

Peer-monitoring models show that, when collateral is unavailable, local peers can mitigate adverse selection and moral hazard by bearing local default costs \parencite{stiglitz1990}. Our sponsor structure follows that logic: participants decide whom to onboard, and downstream default harms the sponsor path. Unlike standard joint liability, however, exposure is capped by earned credit and delegated capacity rather than open-ended legal liability. This local loss rule is also related to contagion models in which spillovers depend on intermediary capitalization \parencite{elliott2014}.

Credit-network models provide a second connection. Early proposals route money and credit through bilateral trust chains \parencite{fugger2004}, and later work shows that trust graphs can support liquidity without centralized intermediation \parencite{dandekar2011,ramseyer2020}. We retain that graph perspective, but restrict trust to a rooted sponsor forest and let successful repayment create new credit.

The mechanism also draws on graph-based Sybil resistance and on the limits of transitive-trust accounting. A Sybil region can draw only on the limited trust extended to it \parencite{yu2006}, while strong transitive trust is difficult to reconcile with Sybil-proof accounting \parencite{seuken2014}. Our conservation result formalizes the same intuition: extra pseudonyms cannot create additional capacity.

Taken together, these literatures point to delegated underwriting as a natural design for pseudonymous credit: local agents screen, trust links allocate scarce capacity, fake identities face real scarcity, and losses stay on the path that created the exposure.

\section{Mechanism}\label{sec:mechanism}

Let $U$ be the set of users and let $S \subseteq U$ be the set of seeds. A delegation from $u$ to $v$ is a directed edge $(u \to v)$ with amount $\Deleg{u}{v} > 0$; it transfers that amount of credit capacity from $u$ to $v$.

Seeds are treated as highly creditworthy and have no sponsors. Every non-seed $v$ has exactly one sponsor $p(v)$. The delegation graph is therefore a rooted forest, so onboarding and default losses follow a unique path back to a seed. This restriction sacrifices some path diversity relative to general credit networks, but it keeps revocation straightforward and makes loss allocation unambiguous.

\begin{figure}[htbp]
\centering
\begin{tikzpicture}[delegation-tree]
  \node[seed-node] (s1) {Seed $s_1$};
  \node[user-node, below left=of s1] (u1) {$u_1$};
  \node[user-node, below right=of s1] (u2) {$u_2$};
  \node[user-node, below=of u1] (u3) {$u_3$};

  \node[seed-node, right=38mm of s1] (s2) {Seed $s_2$};
  \node[user-node, below=of s2] (u4) {$u_4$};
  \node[user-node, below=of u4] (u5) {$u_5$};

  \draw[delegation-edge] (s1) -- (u1);
  \draw[delegation-edge] (s1) -- (u2);
  \draw[delegation-edge] (u1) -- (u3);
  \draw[delegation-edge] (s2) -- (u4);
  \draw[delegation-edge] (u4) -- (u5);

  \begin{scope}[on background layer]
    \node[forest-box, fit=(s1) (u1) (u2) (u3), label={[font=\footnotesize\bfseries]above:Tree 1}] {};
    \node[forest-box, fit=(s2) (u4) (u5), label={[font=\footnotesize\bfseries]above:Tree 2}] {};
  \end{scope}
\end{tikzpicture}
\caption{Sponsor forest. Shaded nodes are seeds, and arrows denote delegations.\label{fig:delegation-forest}}
\end{figure}

\subsection{Budgets and Credit Limits}

For each user $u \in U$, let $\Budget{u}$ denote current budget, $\Earned{u}$ earned credit from past repayments, and $\Principal{u}$ outstanding principal; $\Principal{u}=0$ when $u$ has no active loan. Each seed $s \in S$ has base budget $\BaseBudget{s} > 0$. Budgets are

\[
\Budget{s} \coloneqq \BaseBudget{s} + \Earned{s}
\quad \text{for } s \in S,
\qquad
\Budget{v} \coloneqq \Deleg{p(v)}{v} + \Earned{v}
\quad \text{for } v \notin S.
\]

When $\Principal{u}=0$, $u$ may borrow up to the credit limit $\CreditLimit{u}$, namely the portion of budget not already delegated:

\[
\CreditLimit{u} \coloneqq \Budget{u} - \sum_{(u\to v)} \Deleg{u}{v}.
\]

After repayment, earned credit updates according to

\[
\Earned{u} \leftarrow \Earned{u} + \Delta G_u.
\]

\begin{theorem}[Credit conservation under delegation]\label{thm:credit-conservation}
Delegation conserves aggregate credit capacity; it only redistributes existing capacity across accounts.
\end{theorem}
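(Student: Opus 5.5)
We first make the statement precise. Aggregate credit capacity will mean the total credit limit $\sum_{u\in U}\CreditLimit{u}$. The plan is to show that this sum equals $\sum_{s\in S}\BaseBudget{s}+\sum_{u\in U}\Earned{u}$. That quantity involves no delegation amounts, so creating, resizing, or revoking a delegation (which changes only the $\Deleg{u}{v}$ and leaves earned credit and base budgets alone) cannot change it. Splitting one account into several pseudonyms is just a sequence of such delegations, so it inherits the same invariance.

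The key computation is a telescoping sum over the sponsor forest. We substitute the budget definitions into $\CreditLimit{u}=\Budget{u}-\sum_{(u\to v)}\Deleg{u}{v}$ and sum over all $u$:
\[
\sum_{u\in U}\CreditLimit{u}
=\sum_{s\in S}\BaseBudget{s}+\sum_{v\notin S}\Deleg{p(v)}{v}+\sum_{u\in U}\Earned{u}-\sum_{u\in U}\sum_{(u\to v)}\Deleg{u}{v}.
\]
The double sum at the end ranges over every delegation edge exactly once. Because the delegation graph is a rooted forest in which every non-seed $v$ has exactly one incoming edge, from $p(v)$, and seeds have none, the edges are in bijection with non-seeds via $(u\to v)\mapsto v$. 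Hence the double sum equals $\sum_{v\notin S}\Deleg{p(v)}{v}$, and the two delegation terms cancel.

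We then phrase the dynamic statement as an invariance. Take a delegation operation: new edge $u\to v$ with amount $a$, or a change of an existing amount. Only two limits move, $\CreditLimit{u}$ by $-a$ and $\CreditLimit{v}$ by $+a$, so the total is unchanged. This also follows directly from the identity above, but stating it locally shows the redistribution explicitly. Chaining such operations gives the conservation claim for any sequence of delegations and revocations.

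The main obstacle is definitional rather than computational. We need to fix what \enquote{capacity} means: the sum of limits, not the sum of budgets, since budgets double-count delegated amounts. We also need to check that the forest property (a unique sponsor, no edges into seeds) is exactly what the bijection uses. A further subtlety is that the individual $\CreditLimit{u}$ could become negative if a user over-delegates. We would note that the protocol presumably enforces $\sum_{(u\to v)}\Deleg{u}{v}\le\Budget{u}$, but conservation of the sum holds regardless.
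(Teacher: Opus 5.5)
Your argument is correct and matches the paper's proof. Like the paper, you sum $\CreditLimit{u}=\Budget{u}-\sum_{(u\to v)}\Deleg{u}{v}$ over all users, cancel the delegation terms because each edge is the unique incoming edge of its non-seed child, and obtain $\sum_{u}\CreditLimit{u}=\sum_{s\in S}\BaseBudget{s}+\sum_{u}\Earned{u}$; your local check that only $\CreditLimit{u}$ and $\CreditLimit{v}$ move, by $-a$ and $+a$, is a pointwise restatement of the same identity.
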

\begin{proof}
Summing $\CreditLimit{u} = \Budget{u} - \sum_{(u\to v)} \Deleg{u}{v}$ over all users gives
\[
\sum_{u\in U} \CreditLimit{u} = \sum_{u\in U} \Budget{u} - \sum_{(u\to v)} \Deleg{u}{v}.
\]
Also,
\[
\sum_{u\in U} \Budget{u} = \sum_{s\in S} \BaseBudget{s} + \sum_{u\in U} \Earned{u} + \sum_{v\notin S} \Deleg{p(v)}{v}.
\]
Every delegation appears exactly once in the last sum, namely as the incoming delegation of its non-seed child, so
\[
\sum_{v\notin S} \Deleg{p(v)}{v} = \sum_{(u\to v)} \Deleg{u}{v}.
\]
Substituting cancels the delegation terms and yields
\[
\sum_{u\in U} \CreditLimit{u} = \sum_{s\in S} \BaseBudget{s} + \sum_{u\in U} \Earned{u}.
\]
Hence aggregate credit capacity depends only on seed base budgets and earned credit. Delegation only redistributes that capacity, so additional pseudonyms cannot increase it.
\end{proof}

\subsubsection{Revocation}

Revoking the sponsor edge into $v$ is admissible only if the subtree rooted at $v$ remains solvent.

\begin{definition}[Required delegation]\label{def:required-delegation}
For each non-seed $v$, the required delegation is given recursively by
\[
\RequiredDeleg{v} \coloneqq \max\left\{0,\; \Principal{v} + \sum_{(v\to w)} \RequiredDeleg{w} - \Earned{v}\right\}.
\]
It is the minimum delegation that must remain on the sponsor edge into $v$ to keep $v$ and all of its descendants solvent.
\end{definition}

\begin{theorem}[Revocation solvency]\label{thm:revocation-solvency}
Let $u=p(v)$. If $\Deleg{u}{v}'$ is the delegation remaining on edge $(u\to v)$ after revocation, then the subtree rooted at $v$ remains solvent if and only if
\[
\Deleg{u}{v}' \ge \RequiredDeleg{v}.
\]
\end{theorem}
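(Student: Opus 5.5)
The plan is to make "solvent" precise and then prove a slightly stronger statement by induction over the subtree rooted at $v$. I take the subtree rooted at a non-seed $w$ with incoming delegation $a$ to be \emph{solvent} if the delegations on the edges strictly inside that subtree can be chosen so that every node $y$ in it covers its own outstanding principal out of its credit limit:
\[
\CreditLimit{y} = \Budget{y} - \sum_{(y\to z)} \Deleg{y}{z} \;\ge\; \Principal{y}.
\]
Here the budget of $w$ itself is computed from the incoming amount, $\Budget{w} = a + \Earned{w}$. Internal delegations are left free because a revocation at $v$ may cascade: descendants may in turn shrink what they pass down. A delegation of amount zero is read as a removed edge, which avoids conflict with the convention $\Deleg{u}{v}>0$.

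Fixing this interpretation is the step I expect to be the main obstacle. The statement is only true, in the "only if" direction, when internal delegations are allowed to be reduced. If they were frozen, $\RequiredDeleg{v}$ would generally be only a sufficient threshold, not a necessary one. I will therefore state the convention explicitly before the induction.

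The stronger claim, proved for every non-seed $w$, is: \emph{the subtree rooted at $w$ with incoming delegation $a \ge 0$ is solvent if and only if $a \ge \RequiredDeleg{w}$.} Applying it at $w=v$ with $a=\Deleg{u}{v}'$ gives the theorem. Since $U$ is finite and the delegation graph is a rooted forest, the induction runs on the height of the subtree.

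For the base case, $w$ is a leaf, so the sum over children is empty. Solvency reduces to $a + \Earned{w} \ge \Principal{w}$, that is, $a \ge \Principal{w} - \Earned{w}$. Together with $a \ge 0$ this is exactly $a \ge \RequiredDeleg{w}$.

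For the inductive step, let $w$ have children $z$, and suppose the claim holds for each child subtree. The subtree at $w$ is solvent iff there are amounts $b_z \ge 0$ such that two conditions hold. First, each child subtree is solvent with incoming amount $b_z$, which by hypothesis means $b_z \ge \RequiredDeleg{z}$. Second, $w$ itself is covered:
\[
a + \Earned{w} - \sum_z b_z \ge \Principal{w}.
\]
The key observation is monotonicity: raising $b_z$ can only help the child and can only hurt $w$.
\begin{itemize}
\item If some feasible choice exists, then setting $b_z = \RequiredDeleg{z}$ keeps every child solvent and only increases $w$'s slack. So solvency is equivalent to $a \ge \Principal{w} + \sum_z \RequiredDeleg{z} - \Earned{w}$.
\item Conversely, if that inequality holds, the choice $b_z = \RequiredDeleg{z}$ is itself a witness.
\end{itemize}
Combining with $a\ge 0$ gives exactly $a \ge \RequiredDeleg{w}$, which closes the induction.

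Finally, I would note a consequence of the recursion. The witnessing configuration leaves each internal edge at its required level and touches only edges inside the subtree of $v$ plus the edge $(u\to v)$. Revocation therefore never affects nodes outside $v$'s subtree. The only change upstream is that $u$'s own credit limit rises by $\Deleg{u}{v}-\Deleg{u}{v}'$.
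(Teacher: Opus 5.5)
Your proof is correct, and its central computation is the paper's. The paper writes solvency of the subtree at $v$ as $\Deleg{u}{v}' + \Earned{v} \ge \Principal{v} + \sum_{(v\to w)} \RequiredDeleg{w}$ and folds in nonnegativity via the $\max$.

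The difference is in rigor. The paper gets that inequality by declaring that the children need ``the minimum required delegations'' $\RequiredDeleg{w}$. That is the theorem itself applied to the children, borrowed from the informal gloss after Definition~\ref{def:required-delegation}. The paper also silently assumes the child edges may be cut down to that minimum. So the paper's proof is really only your inductive step.

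Your version supplies the missing pieces: an explicit notion of solvency, an induction on subtree height so the children's case is a genuine hypothesis, and the exchange step showing any feasible allocation $b_z$ can be lowered to $\RequiredDeleg{z}$ without hurting $w$. In effect you prove the gloss in Definition~\ref{def:required-delegation} instead of presupposing it.

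One sentence would align your existential definition with the word ``revocation''. If the pre-revocation state is solvent, your claim applied to the current configuration gives $\Deleg{w}{z} \ge \RequiredDeleg{z}$ on every internal edge. Hence the witness $b_z = \RequiredDeleg{z}$ only ever reduces delegations.

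Your motivating remark, however, has the directions reversed. Solvency with frozen internal delegations is a special case of your existential solvency, so it still forces $\Deleg{u}{v}' \ge \RequiredDeleg{v}$. The ``only if'' direction survives freezing; what fails is the ``if'' direction.

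For a counterexample, take $v$ with a single leaf child $z$, $\Principal{v} = \Earned{v} = \Principal{z} = \Earned{z} = 0$, and $\Deleg{v}{z} = 10$. Then $\RequiredDeleg{v} = 0$. But with $\Deleg{v}{z}$ frozen, revoking to $\Deleg{u}{v}' = 0$ leaves $\CreditLimit{v} = -10 < \Principal{v}$.

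So under freezing, $\RequiredDeleg{v}$ is necessary but not sufficient, and it is sufficiency that needs the cascading reductions your definition permits. Your proof never uses this remark, so only the sentence needs fixing.
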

\begin{proof}
After revocation, the subtree rooted at $v$ is solvent exactly when $v$ has enough remaining budget to cover its own principal and the minimum required delegations to all children. Since $v$'s remaining budget is $\Deleg{u}{v}' + \Earned{v}$, this condition is
\[
\Deleg{u}{v}' + \Earned{v} \ge \Principal{v} + \sum_{(v\to w)} \RequiredDeleg{w}.
\]
Rearranging gives
\[
\Deleg{u}{v}' \ge \Principal{v} + \sum_{(v\to w)} \RequiredDeleg{w} - \Earned{v}.
\]
Since delegation is nonnegative, this is equivalent to
\[
\Deleg{u}{v}' \ge \max\left\{0,\; \Principal{v} + \sum_{(v\to w)} \RequiredDeleg{w} - \Earned{v}\right\}
= \RequiredDeleg{v},
\]
where the equality is Definition~\ref{def:required-delegation}.
\end{proof}

\subsection{Default propagation}

When borrower $u$ defaults on principal $\Principal{u} > 0$, the protocol first burns the borrower's earned credit and records the residual loss:

\[
\Absorb{u} \coloneqq \min\left\{\Earned{u}, \Principal{u}\right\},
\qquad \Earned{u} \leftarrow \Earned{u} - \Absorb{u},
\qquad \Loss \coloneqq \Principal{u} - \Absorb{u}.
\]

The residual loss then propagates up the sponsor path. Starting from $j \coloneqq u$, while $\Loss > 0$ and $j$ is not a seed, let $v \coloneqq p(j)$ and apply

\[
\begin{aligned}
\Deleg{v}{j} &\leftarrow \Deleg{v}{j} - \Loss,\\
\Absorb{v} &\coloneqq \min\left\{\Earned{v}, \Loss\right\},\\
\Earned{v} &\leftarrow \Earned{v} - \Absorb{v},\\
\Loss &\leftarrow \Loss - \Absorb{v},\\
j &\leftarrow v.
\end{aligned}
\]

If the loop ends at a seed $s$ with $\Loss > 0$, the protocol charges the remainder to the seed's base budget:

\[
\BaseBudget{s} \leftarrow \BaseBudget{s} - \Loss.
\]

Finally set $\Principal{u} \leftarrow 0$.

\begin{theorem}[Default propagation is well-defined]\label{thm:default-well-defined}
Every delegation reduction along the default path is feasible, and any residual loss reaching a seed is absorbed without overdrawing base budget.
\end{theorem}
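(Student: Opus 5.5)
The plan is an induction along the sponsor path, carrying one invariant: the residual loss $\Loss$ entering a step never exceeds the delegation it is about to reduce. This needs a solvency hypothesis on the state before default. I would take it from the revocation framework: every sponsor edge satisfies $\Deleg{p(j)}{j} \ge \RequiredDeleg{j}$ (Theorem~\ref{thm:revocation-solvency}), and every seed $s$ satisfies $\BaseBudget{s} + \Earned{s} \ge \Principal{s} + \sum_{(s\to w)} \RequiredDeleg{w}$. I would state this as the standing invariant maintained by the protocol, since borrowing within $\CreditLimit{u}$ and admissible revocations preserve it.

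\textbf{Base step.} At the defaulting borrower $u$, the initial residual is $\Loss = \Principal{u} - \min\{\Earned{u},\Principal{u}\} = \max\{0, \Principal{u} - \Earned{u}\}$. Since $\RequiredDeleg{w}\ge 0$ for every child $w$ of $u$, this is at most $\RequiredDeleg{u} \le \Deleg{p(u)}{u}$. So the first reduction $\Deleg{p(u)}{u} \leftarrow \Deleg{p(u)}{u} - \Loss$ leaves a nonnegative amount. If $u$ is a seed, the loop is skipped and the same computation charges $\Loss \le \BaseBudget{u}$ directly.

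\textbf{Inductive step.} Suppose the loss passed from $j$ to $v = p(j)$ satisfies $\Loss \le \RequiredDeleg{j}$. Burning $\Absorb{v} = \min\{\Earned{v},\Loss\}$ leaves
\[
\Loss - \Absorb{v} = \max\{0, \Loss - \Earned{v}\} \le \max\Bigl\{0,\; \Principal{v} + \sum_{(v\to w)} \RequiredDeleg{w} - \Earned{v}\Bigr\} = \RequiredDeleg{v}.
\]
The inequality holds because $\Loss \le \RequiredDeleg{j}$, $\RequiredDeleg{j}$ is one of the summands, and the other terms are nonnegative. Hence if $v$ is not a seed, $\Loss - \Absorb{v} \le \RequiredDeleg{v} \le \Deleg{p(v)}{v}$, so the next edge reduction is feasible. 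If $v$ is a seed $s$, the same bound together with the seed solvency invariant gives
\[
\Loss - \Absorb{s} \le \max\Bigl\{0,\; \Principal{s} + \sum_{(s\to w)} \RequiredDeleg{w} - \Earned{s}\Bigr\} \le \BaseBudget{s},
\]
so the final charge leaves $\BaseBudget{s} \ge 0$. The loop terminates because the path to a seed in the forest is finite.

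\textbf{Main obstacle.} The key difficulty is the hypothesis itself: the statement is false without some pre-default solvency invariant, for example a non-seed $u$ with $\Principal{u} > \Deleg{p(u)}{u} + \Earned{u}$. I therefore expect most of the care to go into stating the invariant precisely and arguing that the protocol maintains it. Borrowing is capped by $\CreditLimit{u}$, which gives $\Principal{u} + \sum_{(u\to w)} \Deleg{u}{w} \le \Budget{u}$, and $\Deleg{u}{w} \ge \RequiredDeleg{w}$ follows by bottom-up induction on the subtree. Revocation is protected by Theorem~\ref{thm:revocation-solvency}. A subtle point is that after a default the reduced edges still satisfy $\Deleg{v}{j} - \Loss \ge \RequiredDeleg{j}$ with the updated quantities, because $\Principal{u}$ drops to $0$. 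If that needs a separate check, I would postpone it: the theorem as stated only concerns the feasibility of a single propagation.
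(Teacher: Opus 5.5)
Your proof is correct and takes the same approach as the paper: an induction up the sponsor path carrying $\Loss \le \RequiredDeleg{j}$, with the same base case and the same one-step bound $\max\{0,\Loss-\Earned{v}\}\le\RequiredDeleg{v}$. The differences are minor. You state explicitly the pre-default solvency hypothesis that the paper uses implicitly when it cites Theorem~\ref{thm:revocation-solvency} for $\Deleg{p(j)}{j}\ge\RequiredDeleg{j}$; you close the seed case with a seed-level solvency inequality rather than the paper's chain $\RequiredDeleg{j}\le\Deleg{s}{j}\le\sum_{(s\to w)}\Deleg{s}{w}\le\Budget{s}$; and you also handle a defaulting seed, which the paper does not mention.
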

\begin{proof}
We claim that whenever the loop is at a non-seed child $j$ with current residual loss $\Loss$, one has
\[
\Loss \le \RequiredDeleg{j}.
\]
For $j=u$, after burning the borrower's earned credit,
\[
\Loss = \Principal{u} - \min\left\{\Earned{u}, \Principal{u}\right\}
= \max\left\{0,\; \Principal{u} - \Earned{u}\right\}
\le \RequiredDeleg{u},
\]
because $\sum_{(u\to w)} \RequiredDeleg{w} \ge 0$ in Definition~\ref{def:required-delegation}.

Now suppose the loop is at $j$ with $\Loss \le \RequiredDeleg{j}$, and let $v=p(j)$. After updating $v$, the outgoing residual loss is
\[
\LossOut = \Loss - \min\left\{\Earned{v}, \Loss\right\}
= \max\left\{0,\; \Loss - \Earned{v}\right\}.
\]
Since $\Principal{v} + \sum_{(v\to w)} \RequiredDeleg{w} \ge \RequiredDeleg{j}$, we get
\[
\LossOut
\le \max\left\{0,\; \RequiredDeleg{j} - \Earned{v}\right\}
\le \max\left\{0,\; \Principal{v} + \sum_{(v\to w)} \RequiredDeleg{w} - \Earned{v}\right\}
= \RequiredDeleg{v}.
\]
So the same bound holds one step higher on the sponsor path.

Therefore every delegation reduction is feasible: by Theorem~\ref{thm:revocation-solvency}, the current state satisfies $\Deleg{p(j)}{j} \ge \RequiredDeleg{j}$, hence the loop never subtracts more than the delegation available on edge $(p(j) \to j)$.

If residual loss reaches a seed $s$ through child $j$, the same bound gives
\[
\Loss \le \RequiredDeleg{j} \le \Deleg{s}{j} \le \sum_{(s\to w)} \Deleg{s}{w} \le \Budget{s} = \BaseBudget{s} + \Earned{s}.
\]
After burning seed earned credit, the remaining charge is
\[
\Loss - \min\left\{\Earned{s}, \Loss\right\} = \max\left\{0,\; \Loss - \Earned{s}\right\} \le \BaseBudget{s},
\]
so the seed's base budget is not overdrawn.
\end{proof}

\begin{theorem}[Sponsor-path credit limit conservation]\label{thm:default-credit-conservation}
On default, every upstream node on the seed-to-$u$ path other than the borrower retains its credit limit, and aggregate credit falls by exactly $\Principal{u}$.
\end{theorem}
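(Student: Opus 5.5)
The plan is a node-by-node accounting of how each credit limit $\CreditLimit{w} = \Budget{w} - \sum_{(w\to z)} \Deleg{w}{z}$ changes during one run of the default procedure. After that, the aggregate claim follows either by summing these changes or directly from the identity in Theorem~\ref{thm:credit-conservation}. Write $\Loss_0 \coloneqq \Principal{u} - \Absorb{u}$ for the residual loss leaving the borrower. Along the path, for each non-seed node $j$ visited, write $\Loss_{\mathrm{in}}(p(j))$ for the loss subtracted from edge $(p(j)\to j)$. For a node $v$ whose earned credit is then burned, write $\Loss_{\mathrm{out}}(v) = \Loss_{\mathrm{in}}(v) - \Absorb{v}$. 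By Theorem~\ref{thm:default-well-defined}, every step is feasible, so these quantities are well defined and nonnegative. Only nodes on the seed-to-$u$ path are touched, so every other credit limit is unchanged.

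\textbf{Intermediate non-seed node $v$ on the path.} Its earned credit falls by $\Absorb{v}$. Its incoming delegation falls by whatever is forwarded past it, which is $\Loss_{\mathrm{out}}(v)$ (zero if the loop stops at $v$). Its outgoing delegation to the child $j$ on the path falls by $\Loss_{\mathrm{in}}(v)$. Hence
\[
\Delta \CreditLimit{v} = -\Absorb{v} - \Loss_{\mathrm{out}}(v) + \Loss_{\mathrm{in}}(v) = 0 .
\]
I will treat the stopping node explicitly. There $\Absorb{v} = \Loss_{\mathrm{in}}(v)$ and $\Loss_{\mathrm{out}}(v)=0$, so the change is again zero. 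Nodes above the stopping point are untouched.

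\textbf{Seed.} If the loss reaches a seed $s$, its outgoing delegation falls by $\Loss_{\mathrm{in}}(s)$. Its earned credit falls by $\Absorb{s}$, and its base budget falls by $\Loss_{\mathrm{in}}(s) - \Absorb{s}$. The net change is again zero.

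\textbf{Borrower.} The borrower's budget falls by $\Absorb{u} + \Loss_0 = \Principal{u}$, and its outgoing delegations are unchanged. So $\CreditLimit{u}$ falls by exactly $\Principal{u}$.

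Summing these changes gives an aggregate drop of exactly $\Principal{u}$. As a cross-check, I will also verify this via Theorem~\ref{thm:credit-conservation}. That theorem says aggregate capacity equals $\sum_s \BaseBudget{s} + \sum_w \Earned{w}$. The total amount removed from earned credit and base budgets is $\Absorb{u} + \sum_v \Absorb{v} + (\text{seed charge})$. Because each step passes on exactly the unabsorbed remainder, this sum telescopes to $\Principal{u}$.

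The main obstacle is the bookkeeping at the boundary cases. These are: the loop terminating at an intermediate node versus reaching a seed; $\Loss_0 = 0$, where nothing propagates at all; and the seed case, where the base-budget charge must be counted as part of the seed's budget change. I will handle them uniformly by defining $\Loss_{\mathrm{out}}(v)=0$ at and above the stopping node. This makes the identity $\Loss_{\mathrm{in}}(v) = \Absorb{v} + \Loss_{\mathrm{out}}(v)$ hold at every path node, with the seed's base-budget charge playing the role of $\Loss_{\mathrm{out}}$.
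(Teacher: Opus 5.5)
Your proposal is correct and matches the paper's proof. Both use the same per-node accounting: each touched upstream node's budget drops by the incoming loss (earned-credit burn plus whatever is passed up its sponsor edge or, at a seed, charged to base budget), its outgoing delegation drops by the same amount, and the borrower's budget drops by $\Principal{u}$ with outgoing delegations unchanged, so summing gives the aggregate claim; your explicit handling of the stopping node and the telescoping cross-check via Theorem~\ref{thm:credit-conservation} are harmless refinements the paper leaves implicit.
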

\begin{proof}
Take any node $v \neq u$ on the sponsor path from $u$ to its seed, and let $j$ be the child of $v$ on that path. When residual loss $\Loss$ reaches $v$, the algorithm reduces $\Deleg{v}{j}$ by $\Loss$, burns
\[
\Absorb{v} = \min\left\{\Earned{v}, \Loss\right\},
\qquad
\LossOut = \Loss - \Absorb{v}
\]
and passes $\LossOut$ upward. Thus $v$'s budget eventually falls by exactly $\Loss$: for a non-seed, earned credit falls by $\Absorb{v}$ and the incoming delegation from $p(v)$ later falls by $\LossOut$; for a seed, earned credit falls by $\Absorb{v}$ and base budget falls by $\LossOut$. In either case,
\[
\Delta \Budget{v} = -(\Absorb{v} + \LossOut) = -\Loss.
\]
Its total outgoing delegation also falls by $\Loss$, since only the edge $(v\to j)$ is changed. Therefore
\[
\CreditLimit{v}' = (\Budget{v} - \Loss) - \left(\sum_{(v\to w)} \Deleg{v}{w} - \Loss\right) = \CreditLimit{v}.
\]
So every upstream node retains its credit limit.

For the borrower $u$, the initial earned-credit burn removes $\Absorb{u}$ from budget and the contraction of the sponsor edge into $u$ removes the remaining $\Principal{u} - \Absorb{u}$. Thus $u$'s budget falls by $\Principal{u}$ in total, while its outgoing delegations are unchanged. Hence
\[
\CreditLimit{u}' = (\Budget{u} - \Principal{u}) - \sum_{(u\to w)} \Deleg{u}{w} = \CreditLimit{u} - \Principal{u}.
\]

All other users are unchanged. Summing over users gives
\[
\sum_{w\in U} \CreditLimit{w}' = \sum_{w\in U} \CreditLimit{w} - \Principal{u}.
\]
Thus a default destroys exactly the unpaid principal in aggregate credit capacity.
\end{proof}

\subsection{Interest}

A proposed loan to borrower $v$ with principal $\LoanAmount{v} > 0$ and term $T$ pays total interest $I \coloneqq I^R + I^D$, where $I^R$ compensates the protocol for expected default loss and $I^D$ compensates sponsors for locked delegation. All state variables below are measured at a fixed pre-origination snapshot, so same-batch revocations, repayments, and redelegations do not affect them.

\subsubsection{Protocol premium}

The protocol premium prices expected default loss. Let $D_v \in (0,1)$ denote the estimated default probability over term $T$, and let $r^R$ denote the protocol premium rate. Then

\[
I^R \coloneqq r^R \LoanAmount{v} T.
\]

\begin{theorem}[Break-even protocol premium rate]\label{thm:protocol-break-even}
The protocol weakly breaks even in expectation if and only if
\[
r^R \ge \frac{D_v}{(1-D_v)T}.
\]
Equality gives exact break-even.
\end{theorem}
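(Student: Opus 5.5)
We set up the protocol's expected profit from the loan under a simple two-outcome model and then show that it is nonnegative exactly under the stated inequality. The modelling choice, which I take to be the one intended by the paper, is this. With probability $1-D_v$ the borrower repays, and the protocol collects the premium $I^R = r^R \LoanAmount{v} T$. With probability $D_v$ the borrower defaults, and the protocol bears the full principal $\LoanAmount{v}$ as loss while collecting no premium. Under the default rule this principal is destroyed in aggregate credit capacity, by Theorem~\ref{thm:default-credit-conservation}, and it is absorbed along the sponsor path without overdraw, by Theorem~\ref{thm:default-well-defined}. Sponsor compensation $I^D$ is a separate pass-through to sponsors and does not enter the protocol's ledger. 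All quantities are taken at the pre-origination snapshot, so $D_v$, $\LoanAmount{v}$ and $T$ are constants.

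The first step is to write the expected net payoff to the protocol:
\[
\Pi \;=\; (1-D_v)\, r^R \LoanAmount{v} T \;-\; D_v \LoanAmount{v}.
\]
The second step observes that \enquote{weakly breaks even in expectation} means $\Pi \ge 0$. The third step divides by the strictly positive quantity $(1-D_v)\LoanAmount{v}T$, which is positive since $D_v<1$, $\LoanAmount{v}>0$ and $T>0$. This gives $r^R \ge D_v/((1-D_v)T)$. Every step is an equivalence, because we only rearrange terms and multiply by a positive constant, so we obtain the \enquote{if and only if}. Replacing each inequality by an equality in the same chain gives the exact break-even claim.

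The main obstacle is justifying the payoff model rather than the algebra. The key point is whether, on default, the protocol loses the whole $\LoanAmount{v}$ and forfeits the premium. If the premium were collected upfront, the payoff would be $r^R\LoanAmount{v}T - D_v\LoanAmount{v}$, and the threshold would become $D_v/T$. I would resolve this by stating explicitly that interest is paid at maturity together with principal, so that it accrues only on repayment. This is consistent with $\Principal{u}$ being the full outstanding principal charged on default. I would also note that the loss passes through burned earned credit and delegations before it reaches seed base budgets. Theorem~\ref{thm:default-credit-conservation} shows that the aggregate loss to the system is exactly the unpaid principal, and this is what justifies booking the loss as $\LoanAmount{v}$.
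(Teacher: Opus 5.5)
Your proposal is correct and matches the paper's proof. Both write the expected protocol cash flow as $(1-D_v)r^R \LoanAmount{v} T - D_v \LoanAmount{v}$ and obtain the equivalence by dividing by positive quantities; your explicit use of $T>0$ and $D_v<1$ makes that division more careful than the paper's, and the appeals to Theorems~\ref{thm:default-well-defined} and~\ref{thm:default-credit-conservation} are optional motivation, since the paper simply posits the same two-outcome payoff model.
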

\begin{proof}
If the borrower repays, the protocol receives $I^R$; if the borrower defaults, it loses $\LoanAmount{v}$. Hence expected protocol cash flow is $(1-D_v)I^R - D_v \LoanAmount{v} = \LoanAmount{v}\bigl((1-D_v)r^R T - D_v\bigr)$. Since $\LoanAmount{v} > 0$, nonnegativity is equivalent to $(1-D_v)r^R T \ge D_v$, i.e. $r^R \ge D_v/((1-D_v)T)$.
\end{proof}

\begin{theorem}[Repay-then-default bound]\label{thm:repay-then-default}
After repayment, any repay-then-default strategy yields incremental profit of at most $\Delta G_v - I^R$. In particular, if $\Delta G_v \le I^R$, the strategy is weakly unprofitable.
\end{theorem}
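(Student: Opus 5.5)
We will compare two continuations for borrower $v$ from the moment the current loan of principal $\LoanAmount{v}$ is originated: the honest baseline, in which $v$ repays and ends with limit $\CreditLimit{v}$, and the deviation, in which $v$ repays, collects the earned-credit increment $\Earned{v} \leftarrow \Earned{v} + \Delta G_v$, and then borrows the enlarged limit and defaults. The incremental profit is the cash the borrower keeps under the deviation minus what it would have had without the repayment step. The plan is to write down each cash flow explicitly and show that the only source of gain is the earned-credit increment, while the only unavoidable cost is the interest.

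\emph{Step 1 (cost of repayment).} To repay, $v$ pays out $\LoanAmount{v}$ plus total interest $I = I^R + I^D \ge I^R$. Having borrowed $\LoanAmount{v}$ at origination, $v$'s net cash from this loan is $-I \le -I^R$. We only need the protocol-premium part, so we drop $I^D \ge 0$; this is the source of the looseness ``at most''.

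\emph{Step 2 (gain from the later default).} After repayment, the credit limit is $\CreditLimit{v} = \Budget{v} - \sum_{(v\to w)} \Deleg{v}{w}$ with $\Budget{v}$ increased by exactly $\Delta G_v$. Nothing else in $v$'s budget changes, because delegations are fixed at the pre-origination snapshot. So the maximal principal that can subsequently be extracted and defaulted on exceeds what was available without the repayment by at most $\Delta G_v$. By Theorem~\ref{thm:default-credit-conservation}, default destroys exactly that principal from $v$'s own capacity and leaves upstream limits unchanged, so $v$ obtains no further gain from the default beyond the cash borrowed. Theorem~\ref{thm:default-well-defined} ensures that the loss is fully absorbed along the path, so no hidden term escapes this accounting.

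\emph{Step 3 (combine).} Relative to defaulting immediately without repaying, the deviation's incremental cash is $+\Delta G_v$ from Step 2 and $-I \le -I^R$ from Step 1, giving at most $\Delta G_v - I^R$. If $\Delta G_v \le I^R$, this is $\le 0$.

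The main obstacle is fixing the right baseline. The statement says ``incremental'', so the comparison must be against the strategy that forgoes the repay step, and the principal $\LoanAmount{v}$ must cancel. I would handle this by noting that the same pre-repayment capacity is available for default in both branches. The only difference is then the earned-credit increment minus the interest paid, and this reduces the claim to the two bookkeeping facts above.
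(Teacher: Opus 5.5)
\paragraph{Gap in Step 2.} Your skeleton matches the paper's: repayment costs at least $I^R$ and creates at most $\Delta G_v$ of extra defaultable notional. But Step 2 has a genuine gap.

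\begin{itemize}
\item The pre-origination snapshot in the Interest subsection only fixes the inputs used to price the loan. It does not freeze delegations afterwards.
\item Between repayment and default, $v$ can delegate its new earned credit to fresh pseudonyms that borrow and default. Sponsors, possibly $v$'s own Sybils, can also top up or revoke the edge into $v$.
\item So your claim that $v$'s limit rises by exactly $\Delta G_v$ is not justified. Even where it holds, it does not bound what a strategy can extract, because the new notional can be borrowed and defaulted by other accounts.
\end{itemize}

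The theorem quantifies over \emph{any} repay-then-default strategy, and rerouting capacity through pseudonyms is exactly the attack the mechanism is built to resist. So this cannot be assumed away.

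The paper's proof argues at the aggregate level instead. By the identity in Theorem~\ref{thm:credit-conservation}, $\sum_u \CreditLimit{u} = \sum_s \BaseBudget{s} + \sum_u \Earned{u}$. Hence repayment raises total capacity by exactly $\Delta G_v$, no matter how delegations are later rearranged. So at most $\Delta G_v$ of new defaultable notional is created, across all accounts. Theorem~\ref{thm:default-credit-conservation}, which you already cite, supplies the matching fact that each default consumes aggregate capacity one-for-one. Replacing your $v$-local Step 2 with this aggregate argument repairs the proof.

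\paragraph{Two smaller points.}

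\begin{itemize}
\item \textbf{Step 1 under coalitions.} Once coalitions are admitted, Step 1 should count the coalition's net outflow. That outflow is $I^R$ rather than $I$, because $I^D$ is paid to sponsors on the path, who may be the borrower's own accounts. Your inequality $-I \le -I^R$ still gives the right bound, but this is why the theorem is stated with $I^R$ rather than $I$.

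\item \textbf{Baseline in Step 3.} ``Defaulting immediately without repaying'' is not the right comparison. Since $v$ may borrow only when $\Principal{v}=0$, defaulting on the current loan extracts only $\LoanAmount{v}$. After repayment, $v$ can borrow its whole enlarged limit. Relative to that baseline, the deviation also collects the unused headroom $\CreditLimit{v} - \LoanAmount{v}$, so the claimed inequality can fail when $\LoanAmount{v} < \CreditLimit{v}$. Compare instead against the best non-repaying strategy, i.e.\ defaulting on the full pre-repayment capacity, so that only the increment attributable to repayment is counted.
\end{itemize}
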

\begin{proof}
Repayment has two incremental effects: it costs the borrower $I^R$ immediately and awards earned credit $\Delta G_v$. By definition, that earned credit raises aggregate credit capacity by exactly $\Delta G_v$, so even in the most borrower-favorable case it can create at most $\Delta G_v$ of later defaultable notional. Hence incremental profit from any repay-then-default strategy is at most $\Delta G_v - I^R$, which is nonpositive when $\Delta G_v \le I^R$.
\end{proof}

\subsubsection{Delegation premium}

The delegation premium compensates sponsors for delegation locked along the unique seed-to-borrower path.

\begin{definition}[Delegation utilization]\label{def:delegation-utilization}
Seed-level delegation utilization is
\[
U^D \coloneqq \frac{\sum_{s\in S}\sum_{(s\to v)} \Deleg{s}{v}}{\sum_{s\in S} \Budget{s}} \in [0,1].
\]
\end{definition}

\paragraph{Delegation premium rate.}

Let $\bar U^D$ denote the pre-origination snapshot value of the utilization in Definition~\ref{def:delegation-utilization}. Given a maximum rate $r^D_{\max} > 0$, interpreted as the delegation premium rate at zero utilization, set the delegation premium rate to

\[
r^D \coloneqq r^D(\bar U^D) = r^D_{\max}(1-\bar U^D).
\]

This linear schedule pays the highest delegation premium when seed utilization is low and lowers the rate as utilization rises. The remaining task is therefore to identify, on each edge of the sponsor path, how much delegation the proposed loan actually locks.

\paragraph{Locked delegation.}

Below any sponsor edge, part of the proposed principal $\LoanAmount{v}$ may already be absorbable by earned credit inside the downstream subtree. Only the remainder must be backed by delegation above.

\begin{definition}[Local buffer]\label{def:local-buffer}
Along the sponsor path $u_0 \to u_1 \to \cdots \to u_d = v$, the local buffer at node $u$ is
\[
b_u \coloneqq \max\left\{0,\; \Earned{u} - \Principal{u} - \sum_{(u\to w)} \RequiredDeleg{w}\right\}.
\]
It is the amount of additional downstream principal that $u$ can absorb without increasing the minimum delegation required from its sponsor.
\end{definition}

\begin{definition}[Downstream buffer]\label{def:downstream-buffer}
For each edge $(u_k \to u_{k+1})$ on that path, the downstream buffer below the edge is
\[
B_k \coloneqq \sum_{i=k+1}^{d} b_{u_i}.
\]
\end{definition}

\begin{theorem}[Feasibility and locked delegation from buffers]\label{thm:locked-delegation}
A loan of principal $\LoanAmount{v}$ is feasible along the sponsor path if and only if
\[
\max\left\{0,\; \LoanAmount{v} - B_k\right\} \le \Deleg{u_k}{u_{k+1}}
\qquad \text{for } k=0,\ldots,d-1.
\]
When this holds, the delegation locked on edge $(u_k \to u_{k+1})$ is
\[
m_k \coloneqq \max\left\{0,\; \LoanAmount{v} - B_k\right\}.
\]
\end{theorem}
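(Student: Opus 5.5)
Feasibility means: after the loan is originated, $\Principal{v}$ becomes $\LoanAmount{v}$ (with $\Principal{v}=0$ beforehand, since borrowing requires no active loan), and every edge on the sponsor path still satisfies the solvency condition of Theorem~\ref{thm:revocation-solvency}. Concretely, we need $\Deleg{u_k}{u_{k+1}} \ge \RequiredDeleg{u_{k+1}}'$ for $k=0,\ldots,d-1$, where primes denote values recomputed after origination. Edges off the path are unaffected, because $\RequiredDeleg{}$ depends only on the subtree below the edge. The plan is to compute $\RequiredDeleg{u_{k+1}}'$ in closed form as the increment created by the loan, and to identify the locked delegation with that increment, $m_k = \RequiredDeleg{u_{k+1}}' - \RequiredDeleg{u_{k+1}}$. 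This requires the natural reading that feasibility is measured in additional locked delegation on top of what the snapshot already requires. If instead the pre-loan required delegations are zero along the path, the increment and the total coincide.

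The key step is a one-node lemma. For any $u$, write
\[
q_u \coloneqq \Principal{u} + \sum_{(u\to w)} \RequiredDeleg{w} - \Earned{u},
\]
so that $\RequiredDeleg{u} = \max\{0,q_u\}$ and $b_u = \max\{0,-q_u\}$. Now suppose the sum over children (or the node's own principal, at $v$) increases by some $\Delta \ge 0$. Then the new required delegation exceeds the old one by exactly $\max\{0,\Delta - b_u\}$. To check this, split into two cases.
\begin{itemize}
\item If $q_u \ge 0$, then $b_u=0$ and both the old and new values are positive, so the increment is $\Delta$.
\item If $q_u<0$, then the old value is $0$ and the new value is $\max\{0,\Delta + q_u\} = \max\{0,\Delta - b_u\}$.
\end{itemize}

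Next I would induct upward from $v=u_d$. Let $\Delta_i$ denote the increment in $\RequiredDeleg{u_i}$. At $u_d$ the perturbation is $\LoanAmount{v}$, so $\Delta_d = \max\{0,\LoanAmount{v}-b_{u_d}\}$. At $u_{i}$ the only changed child term is $\RequiredDeleg{u_{i+1}}$, so $\Delta_i = \max\{0,\Delta_{i+1}-b_{u_i}\}$. Unrolling uses the identity
\[
\max\{0,\max\{0,x-a\}-b\} = \max\{0,x-a-b\}
\]
for $a,b\ge 0$. This gives
\[
\Delta_{k+1} = \max\Bigl\{0,\LoanAmount{v}-\sum_{i=k+1}^d b_{u_i}\Bigr\} = \max\{0,\LoanAmount{v}-B_k\} = m_k.
\]

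Finally, combining with Theorem~\ref{thm:revocation-solvency} gives both parts. Read incrementally against the snapshot's available delegation on edge $(u_k\to u_{k+1})$, the subtree below that edge stays solvent after origination if and only if $m_k \le \Deleg{u_k}{u_{k+1}}$. Feasibility of the whole path is the conjunction of these conditions over $k$, and $m_k$ is exactly the extra delegation the loan pins on that edge.

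The main obstacle is making precise what ``feasible'' and ``locked'' mean relative to existing required delegation. The cleanest route is to state that buffers are computed from the snapshot and that $\Deleg{u_k}{u_{k+1}}$ in the statement denotes the delegation not already required. If the pre-loan $\RequiredDeleg{}$ values are positive along the path, the buffers $b_{u_i}$ for those nodes are $0$ by definition, and the lemma's first case handles this consistently.
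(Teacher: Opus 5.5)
Your proposal is correct and follows the same route as the paper: downstream buffers absorb what they can, and the residual $\max\{0,\LoanAmount{v}-B_k\}$ is what edge $(u_k\to u_{k+1})$ must carry. Your one-node increment lemma and upward induction actually prove the joint-absorption claim that the paper only asserts from Definitions~\ref{def:local-buffer} and~\ref{def:downstream-buffer}, and they pin down $m_k$ as the increase $\RequiredDeleg{u_{k+1}}'-\RequiredDeleg{u_{k+1}}$. Your caveat is genuinely needed, not cosmetic, and it matches the paper's own phrase ``delegation available on the edge''. If $\Deleg{u_k}{u_{k+1}}$ is read as the total delegation, the displayed equivalence fails whenever $\RequiredDeleg{u_{k+1}}>0$ before origination. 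For example, take $d=1$, $\Earned{v}=0$, a single child $w$ of $v$ with $\Principal{w}=\Deleg{v}{w}=5$ and $\Earned{w}=0$, and $\Deleg{u_0}{v}=8$. Then $B_0=0$, so $\LoanAmount{v}=6$ passes the stated test, yet $\RequiredDeleg{v}'=11>8$.
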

\begin{proof}
Fix $k$. By Definitions~\ref{def:local-buffer} and~\ref{def:downstream-buffer}, the nodes below edge $(u_k \to u_{k+1})$, namely $u_{k+1},\ldots,u_d$, can jointly absorb at most $B_k$ of the proposed loan. So the amount that must still be backed from above that edge is
\[
\max\left\{0,\; \LoanAmount{v} - B_k\right\}.
\]
The loan is feasible exactly when this required amount does not exceed the delegation available on the edge. Under that condition, the amount rendered unavailable to other uses on the edge is precisely that required amount, namely $m_k$.
\end{proof}

Thus pricing reduces to Definitions~\ref{def:local-buffer} and~\ref{def:downstream-buffer}: compute each node's local buffer $b_u$, sum them below each edge to obtain $B_k$, and charge sponsors on the leftover amount $m_k$ that downstream buffers do not absorb.

Each sponsor receives

\[
\Payout{u_k} \coloneqq r^D m_k T.
\]

The borrower pays in total
\[
I^D \coloneqq \sum_{k=0}^{d-1} \Payout{u_k} = r^D T \sum_{k=0}^{d-1} m_k.
\]

So the delegation premium is budget balanced by construction: every dollar paid by the borrower is paid out to sponsors along the path. Intermediaries increase total premiums only when downstream buffers below their edge are small enough that their delegation is genuinely locked.

\section{Discussion and Scope}

The mechanism should be read as a baseline accounting design rather than a complete lending stack. Its strongest assumption is an exogenous seed layer with positive base budgets. The contribution of delegated underwriting is not to manufacture trust from scratch, but to make access to sponsor capacity scarce, traceable, and loss-bearing.

The analysis is also intentionally static. Loans are summarized by outstanding principal, pricing uses an exogenous default estimate, and the theorems abstract from recovery, servicing costs, correlated shocks, dynamic reputation, and strategic coalition formation among sponsors. Those considerations matter for deployment and may affect welfare or calibration, but they do not alter the core conservation, revocation, and default-propagation identities established here.

\section{Conclusion}

Delegated underwriting does not solve pseudonymous unsecured lending in full generality, but it does isolate a coherent accounting baseline. Borrower entry is open, yet every new exposure must be backed by scarce sponsor capacity; aggregate credit expands only through repayment; and default losses remain on the path that created the exposure. Correlated shocks, dynamic reputation, richer underwriting signals, and collusive sponsor cartels remain open extensions. Even so, the mechanism resolves the central accounting problem: credit capacity can grow through repayment, but not through the creation of fresh pseudonyms.

\printbibliography

\end{document}